\theoremstyle{plain}
\newtheorem*{theorem*}{Theorem}
 \definecolor{BLACK}{gray}{0}
 \definecolor{WHITE}{gray}{1}
 \definecolor{RED}{rgb}{1,0,0}
 \definecolor{GREEN}{rgb}{0,1,0}
 \definecolor{BLUE}{rgb}{0,0,1}
 \definecolor{CYAN}{cmyk}{1,0,0,0}
 \definecolor{MAGENTA}{cmyk}{0,1,0,0}
 \definecolor{YELLOW}{cmyk}{0,0,1,0}
\def\identity{\leavevmode\hbox{\small1\kern-3.8pt\normalsize1}}
\newcommand{\ket}[1]{\left | #1 \right\rangle}
\newcommand{\bra}[1]{\left \langle #1 \right |}
\renewcommand{\epsilon}{\varepsilon}
\begin{document}

\title{Revealing non-classicality of inaccessible objects}

\author{Tanjung Krisnanda}
\affiliation{School of Physical and Mathematical Sciences, Nanyang Technological University, Singapore}

\author{Margherita Zuppardo}
\affiliation{School of Physical and Mathematical Sciences, Nanyang Technological University, Singapore}
\affiliation{Science Institute, University of Iceland, Dunhaga 3, IS-107 Reykjavik, Iceland}

\author{Mauro Paternostro}
\affiliation{School of Mathematics and Physics, Queen's University, Belfast BT7 1NN, United Kingdom}

\author{Tomasz Paterek}
\affiliation{School of Physical and Mathematical Sciences, Nanyang Technological University, Singapore}
\affiliation{Centre for Quantum Technologies, National University of Singapore, Singapore}
\affiliation{MajuLab, CNRS-UNS-NUS-NTU International Joint Research Unit, UMI 3654, Singapore}

%\pacs{insert}

\begin{abstract}
Some physical objects are hardly accessible to direct experimentation.
It is then desirable to infer their properties based solely on the interactions they have with systems over which we have control.
In this spirit, here we introduce schemes for assessing the non-classicality of the inaccessible objects as characterised by quantum discord.
We consider two probes individually interacting with the inaccessible object, but not with each other.
The schemes are based on monitoring entanglement dynamics between the probes.
Our method is robust and experimentally friendly as it allows the probes and the object to be open systems, makes no assumptions about the initial state, dimensionality of involved Hilbert spaces and details of the probe-object Hamiltonian.
We apply our scheme to a membrane-in-the-middle optomechanical system, to detect system-environment correlations in open system dynamics as well as non-classicality of the environment,
and we foresee potential benefits for the inference of the non-classical nature of gravity.
\end{abstract}

\maketitle

What should be known about an inaccessible object to conclude that it is ``not classical"? 
Here we show, inspired by quantum communication scenarios, that it is sufficient to verify whether such object can be used to increase quantum entanglement between remote probe-particles that individually interact with it, but are not directly coupled to each other.

Specifically, we prove that such gain in quantum entanglement is only possible if, during its evolution, the object shares with the probes quantum correlations in the form of quantum discord~\cite{henderson2001classical,discord,celeri-review,reviewd,JPA.49.473001}.
In turn, the presence of quantum discord between the probes and the object entails a non-classical feature of the object itself.
According to the definition of discord, two or more subsystems share quantum correlations if there is no von Neumann measurement on one of them that keeps the total state unchanged.
This can only happen when non-orthogonal (indistinguishable) states are involved in the description of the physical configuration of the measured subsystem.
This indistinguishability is the non-classical feature that we aim to detect. We formulate analytical criteria revealing such non-classicality based on operations performed only on the probes, and without any detailed modelling of the inaccessible object in question.

We emphasise that the non-classicality is revealed under a set of minimal assumptions.
Namely: (i) The object may remain inaccessible at all times, i.e. it needs not be directly measured. 
In particular its quantum state and Hilbert space dimension can remain unknown throughout the whole assessment. 
Our method is thus valid when the object is an elementary system or an arbitrarily complex one;
(ii) The details of the interaction between the object and the probes may also remain unspecified;
(iii) Every party can be open to its own local environment.
These properties make our method applicable to a large number of experimentally relevant situations. 

We demonstrate the revealing power of our criteria for non-classicality through the study of an optomechanical system, which is a platform of enormous experimental interest.
This is clearly not the only situation that can benefit from the results of our investigation.
We conclude the paper with a discussion of a set of physical problems, from the revelation of system-environment correlations in open system dynamics to the quest for the possible quantum nature of gravity, that would be fully suited to the framework presented here.

\begin{figure}[!b]
\includegraphics[width=0.4\textwidth]{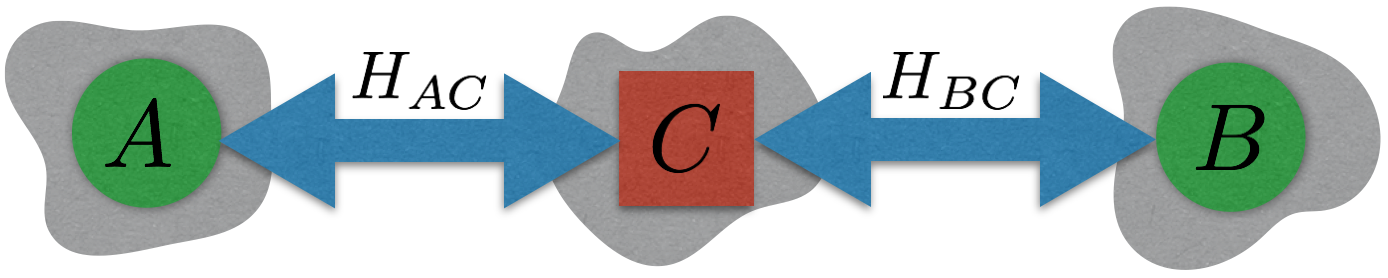} 
\caption{Probes $A$ and $B$ individually interact with a mediator object $C$, but not with each other. 
We allow $C$ to be inaccessible, i.e. no measurement can be performed on it and its state may remain unknown. 
We show conditions under which the gain of entanglement in $AB$ implies non-zero quantum discord $D_{AB|C}$. 
Our protocols make no assumptions about the dimensions of each subsystem, explicit form of $H_{AC}$ and $H_{BC}$, and allow each subsystem to be open to its own environment (represented by grey-colored shadows).}
\label{FIG_ABC} 
\end{figure}

\emph{The formal criteria.} Consider the scenario depicted in Fig.~\ref{FIG_ABC}.
System $C$ is assumed to be the inaccessible object and to mediate the interaction between two remote probes, labeled $A$ and $B$. Therefore, from now on, we refer to system $C$ as the {\it mediator}.
It is essential for our method that the probes are not directly coupled and only interact via the mediator. Therefore, the Hamiltonian for the process under scrutiny can be written as $H_{AC} + H_{BC}$, with $H_{JC}$ the interaction Hamiltonian between the mediator $C$ and probe $J=A,B$.
Our work is developed in the context of entanglement distribution with continuous interactions~\cite{cubitt}.
We first focus on the partition $A:BC$ and demonstrate a result which will be instrumental to design our criteria for the inference of non-classicality of $C$ based on entanglement dynamics in $AB$ only.
Previous studies on the resources allowing for entanglement distribution showed that any three-body density matrix, i.e. the state of $ABC$ at any time $t$ in the present context, satisfies the inequality~\cite{bounds1, bounds2}:
\begin{equation}
\label{ebound}
|E_{A:BC}(t) - E_{AC:B}(t) | \le D_{AB|C}(t).
\end{equation}
Here $E_{X:Y}$ is the relative entropy of entanglement in the partition $X:Y$~\cite{vedral1997quantifying}, and $D_{X|Y}$ is the relative entropy of discord~\cite{modi2010unified}, also known as the one-way quantum deficit~\cite{deficit}.
Note that relative entropy of discord is in general not symmetric, i.e. $D_{X|Y} \ne D_{Y|X}$.
Eq.~(\ref{ebound}) shows that the change in entanglement due to the relocation of $C$ is bounded by the quantum discord carried by it.

Let us start from the simple case where the overall probes-mediator system is closed (which allows us to ignore for now the grey-colored shadows in Fig.~\ref{FIG_ABC}).
If the interaction Hamiltonians $H_{JC}$ satisfy $[H_{AC},H_{BC}] = 0$, the evolution operator from the initial time $t=0$ to some finite time $\tau$ is just 
$U = U_{BC} U_{AC}$, where $U_{JC} = \exp{(- i H_{JC} \tau)}$ and we set $\hbar = 1$.
This situation is equivalent to first interacting $C$ with $A$ and then $C$ with $B$ (or in reversed order).
However, note that the density matrix $\rho' = U_{AC} \rho_0 U_{AC}^\dagger$ obtained by ``evolving'' the initial state through $U_{AC}$ only does not describe the state of the system at $\tau$.
Nevertheless, we now show the relevance of the properties of state $\rho'$ for entanglement gain.

Consider the following forms of Eq.~(\ref{ebound}) written for the initial state $\rho_0$ and the instrumental state $\rho'$, respectively
\begin{equation}
\begin{aligned}
&E_{AC:B}(0) - E_{A:BC}(0) \le D_{AB|C}(0), \\
&E_{A:BC}' - E_{AC:B}' \le D_{AB|C}'.
\end{aligned}
\end{equation}
Note that $E_{AC:B}(0) = E_{AC:B}'$, because $U_{AC}$ is local in this partition.
The state at time $\tau$ is given by $\rho_\tau = U_{BC} \rho' U_{BC}^\dagger$, and thus $E_{A:BC}(\tau) = E_{A:BC}'$, this time owing to $U_{BC}$ being local.
Summing the above inequalities we obtain a bound on the entanglement gain
\begin{equation}
\label{ebound_finite}
E_{A:BC}(\tau) - E_{A:BC}(0) \le D_{AB|C}(0) + D_{AB|C}'.
\end{equation}
This opens up the possibility to create entanglement at time $\tau$ without producing discord at both $t=0$ and $\tau$, but rather by utilising non-classicality in the instrumental state.
In other words, the gain of entanglement in $A:BC$ could be mediated by object $C$, which gets non-classically correlated by $U_{AC}$ 
and then decorrelated by $U_{BC}$. Therefore, $C$ is only classically correlated at times $t=0$ and $\tau$. 
We now give a concrete example of this type of entanglement creation.

Consider the interaction Hamiltonian
\begin{equation}
\label{xhamiltonian}
H=\sigma^x_A\otimes \openone \otimes \sigma^x_C+\openone \otimes \sigma^x_B \otimes \sigma^x_C ,
\end{equation}
where $\sigma^j~(j=x,y,z)$ is the Pauli-$j$ matrix. 
As initial state we choose the classically correlated state
\begin{eqnarray}
\rho_0 &=& \tfrac{1}{2}| 011 \rangle \langle 011 | + \tfrac{1}{2}| 100 \rangle \langle 100 |,
\end{eqnarray}
where e.g. $\sigma^z|0\rangle=|0\rangle$. % is the eigenstate of the Pauli-$z$ matrix with eigenvalue $+1$.
One can now readily check that the relative entropy of entanglement $E_{A:BC}$ grows from $0$ to $1$ in the timespan from $t=0$ to $\tau = \pi/4$, whereas discord $D_{AB|C}$ remains zero at these two times. The gain is indeed due to non-classical correlations of the instrumental state: applying only $U_{AC}$ for a time $\tau = \pi/4$ produces discord $D_{AB|C}' = 1$.

For general non-commuting interaction Hamiltonians, one can pursue a similar analysis with the help of the Suzuki-Trotter expansion.
The evolution operator $U$ is now discretised into short-time interactions of $C$ with $A$ and then $B$ (or the reversed order) as
\begin{eqnarray}
U&=& \lim_{n\rightarrow \infty} \left(e^{-i H_{BC}\Delta{t}}e^{-i H_{AC}\Delta{t}}\right)^n ,
\end{eqnarray}
where $\Delta t = \tau /n \rightarrow 0$. 
Accordingly, Eq.~(\ref{ebound_finite}) holds with $\tau$ replaced by $\Delta t$.
It is now natural to ask if a scenario exists where entanglement could be increased via interactions with a classical $C$ \emph{at all times} by exploiting the discord in the instrumental state.
The example given above is not of this sort because, although we have $D_{AB|C}=0$ at $t=0$ and $\tau$, it is non-zero for $t \in (0,\tau)$.
It turns out that, for short evolution times, the discord of the instrumental state cannot be exploited as the following Theorem demonstrates.

\begin{theorem*}
For three open systems $A$, $B$, $C$ with Hamiltonian $H = H_{AC} + H_{BC}$ and each coupled to its own local environment, the entanglement satisfies the condition $E_{A:BC}(\tau)\le E_{A:BC}(0) $ if $D_{AB|C}(t) = 0$ at any time $t\in [0,\tau]$.
\end{theorem*}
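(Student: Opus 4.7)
My strategy is to elevate the closed-system, commuting-Hamiltonian reasoning preceding the theorem into an \emph{infinitesimal} statement: show that a single time-slice of length $\Delta t$ can grow $E_{A:BC}$ by at most $O(\Delta t^2)$, then telescope over $n=\tau/\Delta t$ slices so that the cumulative gain vanishes as $n\to\infty$.

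\textbf{Per-step reduction.} I would partition $[0,\tau]$ into $n$ equal intervals and Suzuki--Trotter the full generator---the coherent coupling $H_{AC}+H_{BC}$ plus the local system--environment Hamiltonians---as $U_{\mathrm{loc}}(\Delta t)\,U_{BC}(\Delta t)\,U_{AC}(\Delta t)+O(\Delta t^2)$. Because the environment assumption is local, $U_{\mathrm{loc}}$ factorises over the extended parties $\tilde A=AE_A,\tilde B=BE_B,\tilde C=CE_C$ and, after tracing over the environments, induces a local CP map on $ABC$. Monotonicity of $E_{A:BC}$ under that local map, combined with invariance under $U_{BC}$ (local in $A:BC$), yields $E_{A:BC}(t+\Delta t)\le E_{A:BC}(\rho_1)+O(\Delta t^2)$ for the instrumental state $\rho_1=U_{AC}(\Delta t)\rho(t)U_{AC}^{\dagger}(\Delta t)$.

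\textbf{Closing with Eq.~(\ref{ebound}).} The hypothesis $D_{AB|C}(t)=0$ and Eq.~(\ref{ebound}) force $E_{AC:B}(t)=E_{A:BC}(t)$, while $U_{AC}$ being local in $AC:B$ gives $E_{AC:B}(\rho_1)=E_{AC:B}(t)$; a second application of Eq.~(\ref{ebound}) at $\rho_1$ then yields $E_{A:BC}(t+\Delta t)-E_{A:BC}(t)\le D_{AB|C}(\rho_1)+O(\Delta t^2)$. The decisive point is $D_{AB|C}(\rho_1)=O(\Delta t^2)$: since $\rho(t)$ is classical--quantum on $C$, a dephasing $M$ satisfies $M(\rho(t))=\rho(t)$ and gives $D_{AB|C}(\rho_1)\le S(M(\rho_1))-S(\rho_1)$; expanding this to first order in $\Delta t$ around $\rho(t)$ the coefficient equals $\mathrm{Tr}[(X-M(X))\log\rho(t)]$ with $X=-i[H_{AC},\rho(t)]$, and vanishes because $X-M(X)$ is off-diagonal in the $M$-basis of $C$ while $\log\rho(t)$ is block-diagonal in that basis. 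Telescoping over the $n$ slices then gives $E_{A:BC}(\tau)-E_{A:BC}(0)\le n\cdot O(\Delta t^2)=O(\Delta t)\to 0$.

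\textbf{Main obstacle.} The whole argument stands or falls on the quadratic-in-$\Delta t$ bound for the instrumental discord: a generic Fannes-type continuity would only yield $O(\Delta t)$ and would \emph{not} telescope to zero. Making the quadratic estimate rigorous requires care when $\rho(t)$ is rank-deficient (so the relative-entropy expansion above needs regularisation), and one must verify that the Trotter step remains controlled under possibly non-Markovian local environments; both points I would handle via the Stinespring dilation, using the local-environment assumption to keep $U_{\mathrm{loc}}$ factorised across $\tilde A,\tilde B,\tilde C$ so that the partition-locality and monotonicity arguments survive the partial trace.
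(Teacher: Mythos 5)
Your argument is sound at the level of rigour of the paper, but it is a genuinely different proof from the one in Appendix~A. The paper never Trotterises the coherent part and never invokes Eq.~(\ref{ebound}) inside the proof: instead it expands the full Lindblad generator to first order in $\Delta t$, computes the conditional states $\rho_{AB}^j(\Delta t)=\bra{\phi_j}\rho_{\Delta t}\ket{\phi_j}/p_j(\Delta t)$ directly, and shows each is a convex mixture of states obtained from the $\rho_{AB}^c$ by \emph{effective local} dynamics (the coherent part collapses to $E_C^j H_A+E_\gamma^j H_B$, the dissipator on $C$ merely reshuffles the flags). The inequality $E_{A:BC}(\Delta t)\le E_{A:BC}(0)$ then follows from the flags condition, convexity of the relative entropy of entanglement, and LOCC monotonicity. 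Your route instead promotes the finite-time bound (\ref{ebound_finite}) to an infinitesimal statement, and the entire burden falls on your new lemma that the instrumental discord $D_{AB|C}(\rho_1)$ generated by $U_{AC}(\Delta t)$ from a zero-discord state is $O(\Delta t^2)$; your computation of this (first-order coefficient $\Tr[(X-M(X))\log\rho(t)]$ vanishing because $X-M(X)$ is block-off-diagonal against the block-diagonal $\log\rho(t)$) is correct, and the upper bound $D_{AB|C}(\rho_1)\le S(M(\rho_1))-S(\rho_1)$ is the right one for the relative entropy of discord. What your approach buys is modularity (it reuses the already-cited bound (\ref{ebound}) rather than re-deriving the structure of evolving classical-quantum states) and a transparent explanation of why the ``instrumental discord'' loophole closes in the continuous-time limit; what the paper's approach buys is the explicit effective local dynamics of the conditional states, which immediately yields the equality $E_{A:BC}(\tau)=E_{A:BC}(0)$ in the closed case, and it avoids continuity estimates for $E$ at the Trotter step (your telescoping needs a Fannes-type bound to convert the $O(\Delta t^2)$ state error into an entanglement error, which silently assumes finite dimension). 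Two caveats: both proofs equally sweep the accumulation of $O(\Delta t^2)$ corrections and rank-deficiency under the rug, so you are not worse off there; but your closing remark about non-Markovian environments via Stinespring is not a free extension --- once the probes are correlated with their local environments, bounding $E_{A:BC}$ requires either a CP reduced map on $ABC$ (Markovianity, as the paper assumes) or carrying the environments into the partition, which changes the discord hypothesis from $D_{AB|C}=0$ to a condition on the enlarged systems.
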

\begin{proof}
The proof is presented in Appendix~\ref{SEC_PROOF}. 
\end{proof}

We emphasise the generality of this Theorem, where both mediator and probes are open to their own local environments. This matches a large number of experimentally relevant situations, some of them being addressed in the last part of this paper.
The setup where $A$, $B$, and $C$ are closed systems is then a special case of the Theorem above in which we have $E_{A:BC}(\tau)=E_{A:BC}(0)$ if $D_{AB|C}(t)=0$ (see Appendix~\ref{SEC_PROOF}). 
Such Theorem extends the monotonicity of entanglement under local operations and classical communication (LOCC)~\cite{locc} to the case of continuous interactions.
In general, zero-discord states are good models for classical communication as they allow for continuous projective measurements on $C$ that do not disturb the whole multipartite state.

We are now in a position to study the presence of discord $D_{AB|C}$ from observing $AB$ only.
In light of the Theorem above, a promising candidate for this goal is the entanglement gain.
However, we now show that some features of the initial tripartite state need to be ensured, but they can be guaranteed by only operating on $AB$.

Let us consider Eq.~(\ref{xhamiltonian}) and choose the initial state
\begin{equation}
\rho_0 = \tfrac{1}{2} \ket{\psi_+}\bra{\psi_+} \otimes \ket{+}\bra{+} + \tfrac{1}{2} \ket{\phi_+}\bra{\phi_+} \otimes \ket{-}\bra{-},
\label{EQ_CL_EX}
\end{equation}
where $\sigma^x\ket{\pm}=\pm|\pm\rangle$, and $\ket{\psi_+}= \frac{1}{\sqrt{2}}(\ket{01}+\ket{10})$ and $\ket{\phi_+} = \frac{1}{\sqrt{2}}(\ket{00}+\ket{11})$ are two Bell states between subsystems $AB$.
As the initial state in Eq.~(\ref{EQ_CL_EX}) contains the eigenstates of $H_C$, the system remains classical, as measured on $C$, at all times.
Furthermore, the classical basis is the same at all times.
Yet, one can verify that the relative entropy of entanglement between the probes is given by $E_{A:B}(t) = 1-S_{AB}(t)$, where $S_{AB}(t)$ is the von Neumann entropy of the $AB$ state at time $t$, and oscillates between $0$ and $1$.
Hence, in general, entanglement gain in the partition $A:B$ does \emph{not} signify the non-classicality of $C$ (non-zero $D_{AB|C}$).

Similar considerations have been presented in Ref.~\cite{gyongyosi2014correlation} to provide a counter-example of the impossibility of entanglement gain via LOCC.
However, the partition $A:BC$ is entangled already from the beginning (in our example we have $E_{A:BC} = 1$).
The subsequent evolution only localises such entanglement to the $A:B$ partition.
This example emphasises that the ancillary particles within the framework of LOCC, here $C$, are not allowed to be initially correlated with the principal system, here $AB$, even if the correlations are classical.

Furthermore, the only way of gaining entanglement in subsystem $AB$ via classical $C$ is to localise it from the already present entanglement in $A:BC$.
This is a consequence of our Theorem and reinforces its role as a proper generalisation of the monotonicity of entanglement to continuous interactions.
Namely,
\begin{equation}
E_{A:B}(\tau) \le E_{A:BC}(\tau) \le E_{A:BC}(0).
\label{E_CHAIN}
\end{equation}
Now, if we ensure by operating on the probes only that the initial entanglements coincide, i.e. $E_{A:BC}(0) = E_{A:B}(0)$, entanglement gain in system $AB$ is only possible due to non-zero discord $D_{AB|C}$.
As we are interested in observing entanglement gain, it is natural to start with as small entanglement as possible.
This leads us to propose the application of an entanglement-breaking channel to one of the available systems, at time $t=0$.
Indeed, after application of the channel, we have $E_{A:B}(0)=E_{A:BC}(0)=0$.
In a more concrete example, the channel is a von Neumann measurement.
An arbitrary measurement is allowed and experimentalist should choose the one having potential for biggest entanglement gain.
Note that the measurement results need not be known.
Our main detection method is illustrated and summarised in Fig.~\ref{FIG_p2}.
We stress that entanglement estimation in step (iii) can be realised with entanglement witnesses~\cite{horo_witness,RevModPhys.81.865}, rendering state tomography unnecessary.
(See Appendix~\ref{SEC_CRIT} for a criterion based on comparison between entanglement and initial purities of the probes.)

\begin{figure}[!t]
\includegraphics[width=0.3\textwidth]{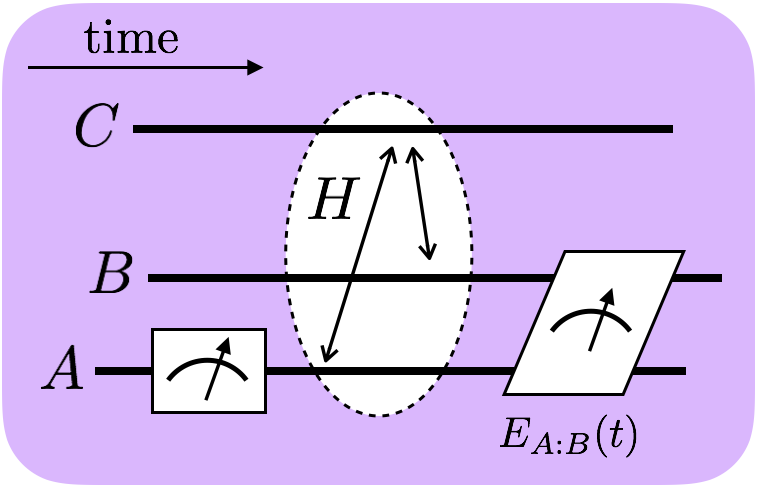} 
\caption{Proposed protocol detecting non-classicality of inaccessible object $C$.
Each system $A$, $B$, and $C$ can be open system, with its own local environment. 
The protocol makes no assumptions about the initial tripartite state and the explicit expression for the Hamiltonian $H_{AC}+H_{BC}$,
and has the following steps:
(i) von Neumann measurement in subsystem $A$ (or any entanglement-breaking channel);
(ii) evolution of the whole $ABC$;
(iii) entanglement estimation of $AB$.
We show in the main text that nonzero entanglement reveals positive discord $D_{AB|C}$.}
\label{FIG_p2} 
\end{figure}

{\it Optomechanics.} We address now the practical implications of our criteria for scenarios of current technological relevance. In particular, we consider experiments of cavity optomechanics~\cite{RMP.86.1391} as the paradigm of an open mesoscopic quantum system for which the criteria identified above hold the potential to be practically significant. In fact, one of the goals of optomechanics is to infer the non-classicality of the state of a massive mechanical system without affecting its (in general fragile) state. A possible setting for such a task is given by a so-called membrane-in-the-middle configuration, where a mechanical oscillator (\emph{a membrane}) is suspended at the centre of a two-sided optical cavity~\cite{paternostro}.
By driving the cavity with laser fields from both its input mirrors, respectively, we realise a situation completely analogous to that in Fig.~\ref{FIG_ABC} (cf. Fig.~\ref{FIG_p3}).
We now show that our scheme detects non-classicality of the membrane without measuring it.

\begin{figure}[!b]
\includegraphics[width=0.34\textwidth]{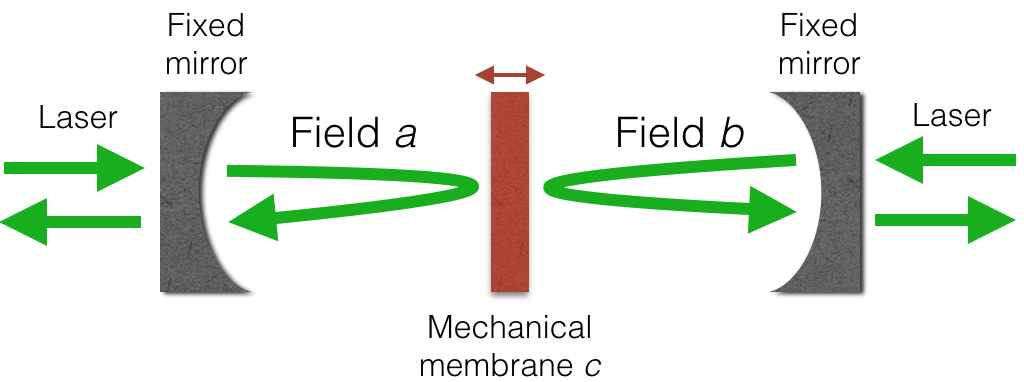} 
\caption{Optomechanics setup. The mechanical membrane $c$ is mediating interaction between driven cavity fields $a$ and $b$.
The membrane is interacting with its local environment at temperature $T$ resulting in the Brownian motion
and the fields are independently interacting with their respective driving lasers through the fixed mirrors.
}
\label{FIG_p3} 
\end{figure}

The interaction Hamiltonian for the setup in Fig.~\ref{FIG_p3} reads~\cite{paternostro}
\begin{equation}
H_{\mathrm{int}} = - \hbar G_{0a} \, a^\dagger a \, q + \hbar G_{0b} \, b^\dagger b \, q.
\end{equation}
This is complemented by local terms affecting each subsystem individually (cf. Appendix~\ref{SEC_OPTMECH}).
Here $a$ and $b$ are the annihilation operators for the respective fields, 
$q$ is the dimensionless position-like quadrature of the membrane,
and $G_{0a(b)}$ represents the strength of the coupling between field $a$ ($b$) and the membrane.
All the other interactions are local, i.e. $a$ ($b$) is coupled to its own environment $a'$ ($b'$) and $c$ is coupled to its thermal phonon reservoir $c'$, responsible for the Brownian motion of the membrane.
Thus, our Theorem directly applies here and we can implement the detection method of Fig.~\ref{FIG_p2}.

In order to independently confirm the non-classicality of the membrane and demonstrate that there is considerable entanglement to be detected we now calculate the ensuing entanglement dynamics.
We choose the logarithmic negativity to quantify entanglement. 
Starting from the experimentally natural state where $c$ is in a thermal state and $a$ and $b$ are coherent states, we calculate the dynamics of $E_{a:b}$ and $E_{ab:c}$.
As initially there is no entanglement, the first step in Fig.~\ref{FIG_p2} can be omitted.
The results of our analysis are presented in Fig.~\ref{FIG_OPTORES} for varying power of the right laser. The parameters used in our simulations all adhere to present-day technology \cite{groblacher2009observation}.
We see that non-zero $E_{a:b}(\tau)$ is always accompanied by non-zero $E_{ab:c}$ at some time $(0,\tau)$.
Note that entanglement is a stronger type of quantum correlations than discord.
We have also performed similar calculations by varying the power of the left laser as well as the frequencies of the lasers within experimentally accessible ranges and observed consistent results (see Appendix~\ref{SEC_OPTMECH}).

\begin{figure}[t]
\includegraphics[scale=0.3]{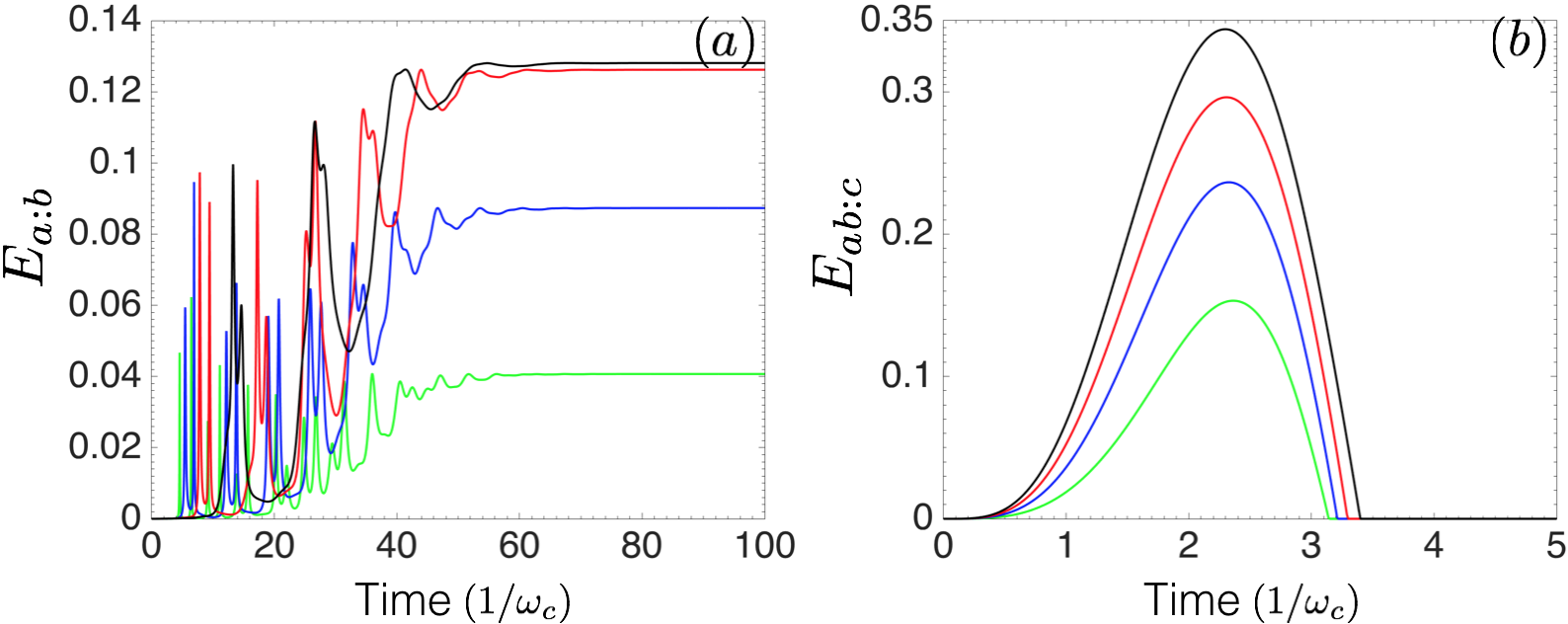}
\caption{Exemplary dynamics of entanglement (logarithmic negativity) $E_{a:b}$ and $E_{ab:c}$ for experimentally viable parameters. 
Mass of membrane $145$ ng with damping rate $2\pi \times 140$ Hz, temperature $0.3$ K, length of each cavity $25$ mm with finesse $1.4\times10^4$, and wavelength of both lasers is $1064$ nm. Here we fixed $P_a=100$ mW, $\Delta_a=\omega_c$, and $\Delta_b=-\omega_c$, where $\omega_c=2\pi \times947$ kHz is the natural frequency of the membrane. We vary $P_b=20$ mW (green), $40$ mW (blue), $60$ mW (red), and $80$ mW (black). $P_{a(b)}$ stands for the power of the left (right) laser, and $\Delta_{a(b)}$ is its effective detuning (see Appendix~\ref{SEC_OPTMECH} for detailed calculations). Note that non-zero entanglement between the fields implies that the membrane is entangled with them in the process.}
\label{FIG_OPTORES} 
\end{figure}

\emph{System-environment correlations.}
As a second relevant application of our study, let us consider again a closed-system dynamics and, in line with the assumed inaccessibility of the mediator, focus the attention to the probes only.
We could thus think of $C$ as an environment in contact with the open system $AB$.
A vast body of literature exists on the study of the influence of initial system-environment correlations (SECs) on the evolution of the open system~\cite{breuer2002theory}.
Proposals for the detection of SECs based on monitoring the dynamics of distinguishability \cite{laine2011witness, smirne2010initial, dajka2010distance, dajka2011distance, wissmann2013detecting} or purity \cite{kimura2007,rossatto2011purity} of the open system have been put forward.
Such proposals have been implemented experimentally by means of quantum tomography \cite{smirne2011experimental, li2011experimentally}.
Moreover, the possible non-classical nature of SECs was linked to the impossibility of describing the evolution of an open system through completely positive maps~\cite{rodriguez2008completely}.
Hence detection schemes of quantum discord in the initial system-environment state have been proposed~\cite{gessner2011detecting, gessner2013local} 
and recently assessed experimentally \cite{gessner2014local, tang2015experimental, cialdi2014two}. 

Our scheme of Fig.~\ref{FIG_p2} can also be used to reveal SECs, with the advantage that state tomography is not necessary. 
This is achieved by dividing the open system into $A$ and $B$ parts and monitoring the presence of entanglement between them.
If one is only interested in the detection of correlations between $AB$ and $C$, regardless of whether they are classical or not, the entanglement-breaking channel in Fig.~\ref{FIG_p2} can be omitted.
Indeed, for the initially uncorrelated state $\rho_0 = \rho_{AB} \otimes \rho_C$, we have $E_{A:BC}(\rho_{AB} \otimes \rho_C) = E_{A:B}(\rho_{AB})$, and no entanglement gain in $AB$ is possible via classical $C$.
Therefore, if one observes a gain, it would either be $\rho_0 \ne \rho_{AB} \otimes \rho_C$ or $D_{AB|C} > 0$ at some time. Both cases show correlations between $AB$ and $C$. 
Finally, we note that previous schemes detect the non-classicality of the system \cite{gessner2011detecting, gessner2013local}, i.e. presence of $D_{C|AB}$, whereas our schemes ascertain the non-classicality of the environment, $D_{AB|C}$,
which is perhaps a prime example of an inaccessible object.

\emph{Other applications.} A similar analysis can be done for remote quantum dots in a solid-state substrate~\cite{pp} or spin-chain systems like in Ref.~\cite{NatPhys.11.255},
as their physics also naturally distinguishes a mediating object that is inaccessible, e.g. locations of unpaired spins are unknown in a sample~\cite{NatPhys.11.255}.
In a visionary perspective, system $C$ could even be a gravitational field coupling massive systems $A$ and $B$, which are mutually non-interacting.
By determining experimentally the entanglement gain between $A$ and $B$ one would conclude, according to our scheme, the non-classical nature of the gravitational field between them.
That is, if we were to embed into the quantum formalism description of the masses and the field, there would have to be non-orthogonal states in the Hilbert space of the field
as this is required for the quantum discord $D_{AB|C}$ to be non-zero.

{\it Conclusions.} We have proposed an entanglement-based criteria for the inference of non-classicality of an inaccessible object.
Our protocols are fully non-disruptive of the state of the system to probe, and rely on only weak assumptions on the nature of the interactions involved.
They are also robust against decoherence.
These features make our proposal suitable to address non-classicality at many levels, 
from experimentally relevant technological platforms such as quantum optomechanics, to fundamental problems on the nature of gravity. 

{\it Acknowledgments.}
We thank {\v C}. Brukner, K. Modi, M. Piani and A. Winter for stimulating discussions.
This work is supported by the National Research Foundation and Singapore Ministry of Education Academic Research Fund Tier 2 project MOE2015-T2-2-034.
MP acknowledges support from the EU project TherMiQ, the John Templeton Foundation (grant nr. 43467), the Julian Schwinger Foundation (grant nr. JSF-14-7-0000).

\appendix

\section{Proof of the theorem}
\label{SEC_PROOF}

\begin{theorem*}
For three open systems $A$, $B$, $C$ with hamiltonian $H = H_{AC} + H_{BC}$ and each coupled to its own local environment, the entanglement satisfies the condition $E_{A:BC}(\tau)\le E_{A:BC}(0) $ if $D_{AB|C}(t) = 0$ at any time $t\in [0,\tau]$.
\end{theorem*}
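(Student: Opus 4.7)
\emph{Plan of proof.} The strategy is to apply inequality~(\ref{ebound}) infinitesimally via a Suzuki--Trotter decomposition of the joint dynamics and to telescope the resulting estimates across $[0,\tau]$. I would first slice the interval into $n$ steps of length $\Delta t=\tau/n$ and write the one-step propagator as $\mathcal{E}_{\Delta t}=\mathcal{D}_A^{\Delta t}\circ\mathcal{D}_B^{\Delta t}\circ\mathcal{D}_C^{\Delta t}\circ U_{BC}^{\Delta t}\circ U_{AC}^{\Delta t}$ to leading order in $\Delta t$, where each $\mathcal{D}_X$ denotes the local Lindblad dissipator acting on party $X$ alone. Because every $\mathcal{D}_X$ is local with respect to both the $A{:}BC$ and the $AC{:}B$ cuts, it can only decrease the corresponding relative entropies of entanglement; only the two unitary pieces can raise $E_{A:BC}$.

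Replicating the closed-system derivation of Eq.~(\ref{ebound_finite}) slice by slice, I would apply~(\ref{ebound}) to $\rho_j$ and to the instrumental state $\rho'_j=U_{AC}^{\Delta t}(\rho_j)$ and invoke the locality of $U_{AC}$ in $AC{:}B$ and of $U_{BC}$ in $A{:}BC$ to cancel the boundary entanglements. This delivers the per-step bound $E_{A:BC}(\rho_{j+1})-E_{A:BC}(\rho_j)\le D_{AB|C}(\rho_j)+D_{AB|C}(\rho'_j)$, whose first summand vanishes for every $j$ by hypothesis. Telescoping yields $E_{A:BC}(\tau)-E_{A:BC}(0)\le\sum_{j=0}^{n-1}D_{AB|C}(\rho'_j)$, so everything hinges on bounding each instrumental discord as $O(\Delta t^2)$: that makes the Riemann-type sum scale as $n\cdot O(\Delta t^2)=O(\Delta t)\to 0$ in the continuum limit. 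To secure this quadratic estimate, I would use the projective measurement $\{\Pi^C_k\}$ supplied by $D_{AB|C}(\rho_j)=0$, namely $\sum_k\Pi^C_k\rho_j\Pi^C_k=\rho_j$, and compare $\rho'_j=\rho_j-i\Delta t\,[H_{AC},\rho_j]+O(\Delta t^2)$ with its dephased counterpart $\sigma'_j=\sum_k\Pi^C_k\rho'_j\Pi^C_k$; both lie within $O(\Delta t)$ of $\rho_j$ while differing from each other by $O(\Delta t)$, so the suboptimal estimate $D_{AB|C}(\rho'_j)\le S(\rho'_j\|\sigma'_j)$ is itself $O(\Delta t^2)$ because the relative entropy has vanishing first variation and a positive-definite (Kubo--Mori) Hessian at coincidence. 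The closed-system corollary ($E_{A:BC}(\tau)=E_{A:BC}(0)$) then follows by applying the same argument to the time-reversed unitary dynamics.

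The main obstacle will be making the quadratic bound on $D_{AB|C}(\rho'_j)$ uniform in $j$ and valid beyond full-rank states. The Taylor expansion of $S(\rho\|\sigma)$ about coincidence is tame only when $\mathrm{supp}\,\rho\subseteq\mathrm{supp}\,\sigma$; a rank-deficient $\rho_j$ for which $[H_{AC},\rho_j]$ generates components outside its support would make $\log\sigma'_j$ singular and spoil the naive expansion. Handling this --- for instance by checking that dephasing along $\{\Pi^C_k\}$ cannot shrink the support of $\rho'_j$, and by using continuity of $\rho(t)$ on the compact interval $[0,\tau]$ to secure a uniformly bounded Hessian constant --- is the decisive technical step. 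Once that is in place, the telescoping and Trotter limits are routine.
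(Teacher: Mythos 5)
Your route is viable but genuinely different from the paper's. The paper never estimates the discord of the instrumental state: it uses $D_{AB|C}(t)=0$ to write the state at each instant in the classical--quantum form $\sum_c p_c\,\rho^c_{AB}\otimes\proj{c}$, invokes the flags condition $E_{A:BC}=\sum_c p_c\,E_{A:B}(\rho^c_{AB})$, and then shows by a first-order expansion of the Lindblad generator that each conditional state $\rho^j_{AB}(\Delta t)$ is a convex mixture of a locally evolved $\tilde\rho^j_{AB}$ and the other conditional states $\rho^c_{AB}$; convexity of the relative entropy of entanglement plus LOCC monotonicity then give $E_{A:BC}(\Delta t)\le E_{A:BC}(0)$ directly, with no telescoping of discords. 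Your proof instead applies inequality~(\ref{ebound}) slice by slice and must show that the leftover sum $\sum_j D_{AB|C}(\rho'_j)$ vanishes in the continuum limit. What your approach buys: it explains transparently why the finite-time counterexample (where $D'_{AB|C}=1$ after a finite $U_{AC}$) does not survive Trotterization, and it does not presuppose that the classical basis $\{\ket{c}\}$ deforms smoothly in time --- an assumption the paper's proof makes implicitly when it writes $\ket{c}\to\ket{\phi_c}=\alpha_c\ket{c}+\beta_c\ket{c_\perp}$ with $\beta_c\propto\Delta t$. What the paper's route buys: it avoids every second-order expansion of a relative entropy, and with it all the support and uniformity issues you correctly identify as the decisive technical step.

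Two caveats beyond those you flag. First, your per-step bound $E(\rho_{j+1})-E(\rho_j)\le D_{AB|C}(\rho_j)+D_{AB|C}(\rho'_j)$ is exact only for the Trotterized trajectory, whereas the hypothesis $D_{AB|C}=0$ holds for the \emph{true} states; reconciling the two requires controlling the accumulated $O(\Delta t^2)$ Trotter error at the level of $E_{A:BC}$, which invokes asymptotic continuity and hence a $\log d$ factor, making the argument dimension-dependent in a way the paper's is not. Second, for rank-deficient $\rho_j$ the pinched-entropy difference $S(\sigma'_j)-S(\rho'_j)$ is only $O(\Delta t^2\log(1/\Delta t))$ (zero eigenvalues are lifted at second order, since $\bra{v}[H_{AC},\rho_j]\ket{v}=0$ on $\ker\rho_j$, and likewise after pinching); the telescoped sum is then $O(\Delta t\log(1/\Delta t))$, which still vanishes, so your conclusion survives even though the advertised quadratic rate does not.
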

\begin{proof}
Without loss of generality we take the hamiltonians as $H_{AC} = H_A \otimes H_C$ and $H_{BC} = H_B \otimes H_\gamma$.
The same steps can be applied to the general hamiltonians $H_{AC} = \sum_{\mu} H_A^{\mu} \otimes H_C^{\mu}$ and $H_{BC} = \sum_{\nu} H_B^{\nu} \otimes H_\gamma^{\nu}$.
As we are assuming vanishing discord (when $C$ is measured) at $t=0$, we can write the initial state as
\begin{equation}
\rho_0 = \sum_{c} p_c \: \rho_{AB}^c \otimes \ket{c}\bra{c}.
\end{equation}
After a time $\Delta t$, the evolved state will be isomorphic to $\rho_0$ and will read
\begin{equation}
\rho_{\Delta t} = \sum_{c} p_c(\Delta t) \: \rho_{AB}^c(\Delta t) \otimes \ket{\phi_c}\bra{\phi_c},
\end{equation}
where both $\{\ket{c}\}$ and $\{ \ket{\phi_c} \}$ form orthonormal bases.
We assume local environments inducing Markovian dynamics. The evolution can thus be described by the following master equation in Lindblad form (which we write, for convenience, in a coarse-grained time form)
\begin{equation}\label{EQ_LE}
\frac{\rho_{\Delta t}-\rho_0}{\Delta t}=-i[H,\rho_0]+\sum_{X=A,B,C}L_X \rho_0,
\end{equation}
where the last term is the incoherent part of the evolution resulting from interactions with local environments, and we have taken (in light of the declared Lindblad form of the master equation) $L_X\rho_0=\sum_k Q^X_k\rho_0 Q^{X\dag}_k-\frac{1}{2}\{Q^{X\dag}_kQ^X_k,\rho_0\}$,
where $Q^X_k$'s act on system $X$ only. We now consider the following conditional state
\begin{eqnarray}
p_j(\Delta t) \: \rho_{AB}^j(\Delta t) &=&\bra{\phi_j}  \rho_{\Delta t}\ket{\phi_j}\nonumber \\
&=& \bra{\phi_j} \rho_0 \ket{\phi_j} -i \Delta t \bra{\phi_j} [H,\rho_0] \ket{\phi_j}\nonumber \\
&&+\Delta t  \sum_{X=A,B,C} \bra{\phi_j} L_X \rho_0\ket{\phi_j}.
\label{EQ_COMM}
\end{eqnarray}
We go to continuous time by taking $\Delta t \to 0$, which implies that all terms proportional to ${\cal O}(\Delta t^2)$ can be ignored.
Therefore, throughout this proof the symbol ``$\simeq$'' should be read as ``equal up to ${\cal O}(\Delta t^2)$''.
In this limit, the basis states of $C$ can only change as $\ket{c} \to \ket{\phi_c} = \alpha_c \ket{c} + \beta_c \ket{c_\perp}$, with $\ket{c_\perp}$ orthogonal to $\ket{c}$ and $\beta_c$ proportional to $\Delta t$.
This implies that $|\langle c | \phi_j \rangle|^2 \simeq \delta_{cj}$, and thus $\bra{\phi_j} \rho_0 \ket{\phi_j} \simeq p_j \rho_{AB}^j$.

As the commutator in Eq.~(\ref{EQ_COMM}) is multiplied by $\Delta t$, the terms in $\bra{\phi_j} [H,\rho_0] \ket{\phi_j}$ proportional to $\Delta t$ can already be ignored. One thus finds that
\begin{equation}
 \bra{\phi_j} [H,\rho_0] \ket{\phi_j} \simeq p_j [E_C^j H_A + E_\gamma^j H_B,\rho_{AB}^j] ,
\end{equation}
where $E_{C(\gamma)}^j$ is the mean energy $\bra{j} H_{C(\gamma)} \ket{j}$.
Hence the coherent part of the evolution splits into the sum of effective local interactions.

Next, there are three terms for the incoherent part of the evolution.
The first two are local in $A$ and $B$ respectively, and they are proportional to
\begin{equation}
\bra{\phi_j} L_A \rho_0+L_B \rho_0\ket{\phi_j}\simeq p_j(L_A+L_B)\rho_{AB}^j.
\end{equation}
The last term can be written as 
\begin{eqnarray}
\bra{\phi_j} L_C \rho_0\ket{\phi_j}&\simeq&\sum_c \sum_k p_c |\bra{j}Q^C_k\ket{c}|^2 \rho_{AB}^c\nonumber \\
&-&p_j\sum_k  \bra{j}Q_k^{C\dagger}Q^C_k\ket{j} \rho_{AB}^j.
\end{eqnarray}

Therefore, we obtain the following explicit form of the conditional state
\begin{eqnarray}
\rho_{AB}^j(\Delta t)& \simeq &  \frac{p_j(1-\sum_k  \bra{j}Q_k^{C\dagger}Q^C_k\ket{j}\Delta t )}{p_j(\Delta t)} \, \, \tilde \rho_{AB}^j \nonumber \\
  & + & \sum_c\frac{p_c}{p_j(\Delta t)}\sum_k |\bra{j}Q^C_k\ket{c}|^2\Delta t \: \rho_{AB}^c,
\label{EQ_US}
\end{eqnarray}
where we have defined 
\begin{eqnarray}
\tilde \rho_{AB}^j & \equiv & \rho_{AB}^j -i[E_C^j H_A + E_\gamma^j H_B,\rho_{AB}^j] \Delta t \nonumber \\
& + & (L_A+L_B)\rho_{AB}^j \Delta t.
\end{eqnarray}
Accordingly, the state $\tilde \rho_{AB}^j$ is obtained from $\rho_{AB}^j$ by applying Lindblad master equation describing independent local evolutions of subsystems $A$ and $B$.
The probability $p_j(\Delta t)$ can be recovered by noting that the trace of Eq.~(\ref{EQ_US}) equals unity. In particular,
\begin{eqnarray}
p_j(\Delta t)&=& p_j \left( 1-\sum_k  \bra{j}Q_k^{C\dagger}Q^C_k\ket{j}\Delta t \right)\nonumber \\
  &&+\sum_cp_c\sum_k |\bra{j}Q^C_k\ket{c}|^2\Delta t,
\end{eqnarray}
where we have used the cyclic property of trace.
Note that the numbers multiplying states $\tilde \rho_{AB}^j$ and $\rho_{AB}^c$ in Eq. (\ref{EQ_US}) are all non-negative and sum up to unity, i.e. they form a probability distribution.

Finally, we have
\begin{eqnarray}
E_{A:BC}(\Delta t)&=&\sum_{j} p_{j}(\Delta t) \: E_{A:B}(\rho_{AB}^{j}(\Delta t)) \\
&\le&\sum_j p_j E_{A:B}(\rho_{AB}^j) \label{EQ_LL} \\
&-&\sum_jp_j\sum_k \bra{j}Q_k^{C\dagger}Q^C_k\ket{j}\Delta t \:E_{A:B}(\rho_{AB}^j)\nonumber \\
&+&\sum_j\sum_cp_c\sum_k|\bra{j}Q^C_k\ket{c}|^2\Delta t \:E_{A:B}(\rho_{AB}^c),\nonumber \\
&=& \sum_j p_j E_{A:B}(\rho_{AB}^j) = E_{A:BC}(0), \label{EQ_LAST}
\end{eqnarray}
where the involved steps are justified as follows.
In the first line we use the flags condition~\cite{flags}.
The inequality follows from applying convexity of the relative entropy of entanglement to $E_{A:B}(\rho_{AB}^{j}(\Delta t))$ 
and next monotonicity of entanglement under local operations and classical communication $E_{A:B}(\tilde \rho_{AB}^j)\le E_{A:B}(\rho_{AB}^j)$. 
By inserting $\sum_c \ket{c}\bra{c}=\openone$ in between $Q_k^{C\dagger}$ and $Q_k^C$ in the second line of (\ref{EQ_LL}) and exchanging dummy indices $c \leftrightarrow j$,
one finds that it cancels the third line, giving Eq.~(\ref{EQ_LAST}).
In the last step we again use the flags condition, this time to the initial state.
The theorem is apparent by evolving the system successively from $t=0$ to $\tau$ and having $D_{AB|C}(t)=0$ at anytime $t\in [0,\tau]$.
\end{proof} 

A special case worth noticing is when the three systems are closed, i.e. we have unitary evolution with hamiltonian $H=H_{AC}+H_{BC}$. 
In this case, it follows that the conditional state reads
\begin{equation}\label{EQ_SC}
\rho_{AB}^j(\Delta t)=\frac{p_j}{p_j(\Delta t)} \, \, \tilde \rho_{AB}^j,
\end{equation}
where $\tilde \rho_{AB}^j = \rho_{AB}^j-i[E_C^j H_A + E_\gamma^j H_B,\rho_{AB}^j] \Delta t$.
Taking the trace of Eq.~(\ref{EQ_SC}) gives us $p_j(\Delta t)=p_j$ and $\rho_{AB}^j(\Delta t)=\tilde \rho_{AB}^j$.
The conditional state evolves under effective local unitary transformations, hence entanglement in the partition $A:B$ stays the same. 
By utilising the flags condition as in the proof of the theorem given above, one can show that $E_{A:BC}(\tau) = E_{A:BC}(0)$.

%%%%%%%%%%%%%%%%%%%%%%%%%%%%%%%%%%%%%%%%

\section{A simple criterion based on purity}
\label{SEC_CRIT}

An elegant criterion for revealing the non-classicality can be derived in terms of the initial purity of probes.
Here we take the purity to be given by the von Neumann entropy.
Our starting point is Eq. (8) of the main text, which is here repeated for convenience:
\begin{equation}
E_{A:B}(\tau) \le E_{A:BC}(\tau) \le E_{A:BC}(0).
\end{equation}
We will now bound the initial entanglement $E_{A:BC}(0)$ by the purity of the probes only.
The relative entropy of entanglement is upper bounded by the mutual information $E_{A:BC} \le I_{A:BC}$~\cite{modi2010unified}. 
From the sub-additivity of entropy for the $BC$ subsystem we have $I_{A:BC} \le S_A + S_B - S_{AB|C}$, and as the state of $C$ is assumed to be classical, $S_{AB|C} \ge 0$, we get $E_{A:BC}(0) \le S_A(0) + S_B(0)$.
Accordingly,
\begin{equation}
E_{A:B}(\tau) \le S_A(0) + S_B(0).
\end{equation}
Any violation of this inequality reveals the non-classicality of $C$,
i.e. non-zero $D_{AB|C}$ at some time during the evolution.

%%%%%%%%%%%%%%%%%%%%%%%%%%%%%%%%%%%%%%%%

\section{Detailed assessment of the optomechanical application}
\label{SEC_OPTMECH}

The hamiltonian of the setup (Fig. 3 in the main text) in a rotating frame with frequency of the lasers can be written as $H=H_{\mbox{loc}}+H_{\mbox{int}}$ where \cite{RMP.86.1391}:
\begin{eqnarray}\label{Eq_hloc}
H_{\mbox{loc}}&=&\hbar \Delta_{0a} a^{\dagger} a+\hbar \Delta_{0b} b^{\dagger} b+\frac{\hbar \omega_c}{2}( p^2+ q^2) \nonumber \\ 
&&+i\hbar E_a(a^{\dagger}- a)+i\hbar E_b( b^{\dagger}- b) 
\end{eqnarray}
and 
\begin{equation}\label{Eq_hint}
H_{\mbox{int}}=-\hbar G_{0a} a^{\dagger}a \:q+\hbar G_{0b}  b^{\dagger} b \:q,
\end{equation}
where the annihilation (creation) operator of field $j=a,b$ is denoted by $ j$ ($ j^{\dagger}$) with $[ j, j^{\dagger}]=1$, 
$p$ and $ q$ are dimensionless quadratures of the membrane with $[q, p]=i$, 
$E_j$ is the driving strength of laser $j$ with $|E_j|=(2P_j\kappa_j/\hbar \omega_{lj})^{1/2}$, where $P_j$ is the laser power and $\omega_{lj}$ denotes its frequency. $\kappa_j=\pi c/2F_jl_j$ is decay rate of cavity $j$ with finesse $F_j$.
Cavity-laser detuning is defined as $\Delta_{0j}\equiv \omega_j-\omega_{lj}$, where $\omega_j$ is the frequency of the cavity
and $G_{0j}= (\omega_j/l_j)(\hbar/\mu \omega_c)^{1/2}$ represents field-membrane coupling strength, 
where $l_j$ is the length of the cavity, $\mu$ is the mass of the membrane and $\omega_c$ is its natural frequency.
Note that $H_{\mbox{loc}}$ is local in $a:b:c$ partition and the two terms in $H_{\mbox{int}}$ represent coupling in the partition $a:c$ and $b:c$ respectively. 

The dynamics of the operators, adding into account noise and damping terms (also local), can be well written by a set of Langevin equations in Heisenberg picture
\begin{eqnarray}\label{Eq_lgvn}
\dot a&=&-(\kappa_a+i\Delta_{0a})a+iG_{0a}aq+E_a+\sqrt{2\kappa_a}\:a_{\mbox{in}} \nonumber \\
\dot b&=&-(\kappa_b+i\Delta_{0b})b-iG_{0b}bq+E_b+\sqrt{2\kappa_b}\:b_{\mbox{in}} \nonumber \\
\dot q&=&\omega_c p \nonumber \\
\dot p&=&-\omega_c q +G_{0a}a^{\dagger}a- G_{0b}b^{\dagger}b-\gamma_c p+\xi
\end{eqnarray}
where $\gamma_c$ is damping rate of the membrane. Also $j_{\mbox{in}}$ is input noise of field $j$ associated with cavity-input mirror interface and has only correlation function $\langle j_{\mbox{in}}(t)k_{\mbox{in}}^{\dagger}(t^{\prime}) \rangle=\delta_{jk}\delta(t-t^{\prime})$ \cite{walls2007quantum}, whereas $\xi$ is Brownian noise of the membrane and has correlation function $\langle \xi(t) \xi(t^{\prime})+\xi(t^{\prime})\xi(t)\rangle/2\approx \gamma_c(2\bar n+1)\delta(t-t^{\prime})$ in the limit of interest that is large mechanical quality of the membrane, i.e. $\omega_c/\gamma_c\gg 1$ \cite{giovannetti2001phase,benguria1981quantum}. The mean phonon number of the membrane reads $\bar n=1/(\exp{(\hbar \omega_c/k_BT)}-1)$.

The linearised Langevin equations can be obtained by splitting the operators into steady state values and fluctuating terms. In particular we write $q=q_s+\delta q$, $p=p_s+\delta p$, and $j=\alpha_{s,j}+\delta j$. By inserting these into Eq. (\ref{Eq_lgvn}) and ignoring nonlinear terms $\delta j^{\dagger}\delta j$ and $\delta j\delta q$ one gets a set of linear Langevin equations for the fluctuation of the quadratures
\begin{eqnarray}\label{Eq_llgvn}
\delta \dot x_a&=&-\kappa_a \delta x_a+\Delta_a \delta y_a+\sqrt{2\kappa_a}\:x_{\mbox{in},a} \nonumber \\
\delta \dot y_a&=&-\kappa_a \delta y_a-\Delta_a \delta x_a+G_a\delta q+\sqrt{2\kappa_a}\:y_{\mbox{in},a} \nonumber \\
\delta \dot x_b&=&-\kappa_b \delta x_b+\Delta_b \delta y_b+\sqrt{2\kappa_b}\:x_{\mbox{in},b} \nonumber \\
\delta \dot y_b&=&-\kappa_b \delta y_b-\Delta_b \delta x_b-G_b\delta q+\sqrt{2\kappa_b}\:y_{\mbox{in},b} \nonumber \\
\delta \dot q&=&\omega_c \delta p \nonumber \\
\delta \dot p&=&-\omega_c\delta q-\gamma_c \delta p+G_a\delta x_a -G_b\delta x_b+\xi 
\end{eqnarray}
where effective detuning $\Delta_a\equiv \Delta_{0a}-G_{0a}q_s$, $\Delta_b\equiv \Delta_{0b}+G_{0b}q_s$, and effective coupling $G_j\equiv \sqrt{2}G_{0j}\alpha_{s,j}$. The steady state values are given by $p_s=0$, $q_s=(G_{0a}|\alpha_{s,a}|^2-G_{0b}|\alpha_{s,b}|^2)/\omega_c$, and $\alpha_{s,j}=|E_j|/\sqrt{\kappa_j^2+\Delta_j^2}$. The quadratures of the field $x_j$ and $y_j$ are related to the field operator $j$ through $j=(x_j+iy_j)/\sqrt{2}$. This relation also applies for the input noise, i.e. $j_{\mbox{in}}=(x_{\mbox{in},j}+iy_{\mbox{in},j})/\sqrt{2}$.

For simplicity one can re-write Eq. (\ref{Eq_llgvn}) as a single matrix equation $\dot u(t)=Ku(t)+n(t)$ where the vector $u^T(t)=(\delta x_a,\delta y_a,\delta x_b,\delta y_b,\delta q,\delta p)$, $n^T(t)=(\sqrt{2\kappa_a}\: x_{\mbox{in,a}},\sqrt{2\kappa_a}\: y_{\mbox{in,a}},\sqrt{2\kappa_b}\: x_{\mbox{in,b}},\sqrt{2\kappa_b}\: y_{\mbox{in,b}},0,\xi)$, and
\begin{equation}
K=\left( \begin{array}{cccccc} -\kappa_a&\Delta_a&0 &0 &0 &0\\-\Delta_a&-\kappa_a&0 &0 &G_a &0\\0&0&-\kappa_b &\Delta_b &0 &0\\0&0&-\Delta_b &-\kappa_b &-G_b &0\\0&0&0 &0 &0&\omega_c\\G_a&0&-G_b &0 &-\omega_c &-\gamma_c \\ \end{array}\right).
\end{equation}
The solution to linearised Langevin equation is then $u(t)=M(t)u(0)+\int_0^t ds M(s)n(t-s)$ where $M(t)=\exp{(Kt)}$.

The quantum state of the fluctuations is fully characterised by covariance matrix $V_{ij}(t)\equiv \langle u_i(t)u_j(t)+u_j(t)u_i(t)\rangle/2-\langle u_i(t)\rangle \langle u_j(t)\rangle$. Note that the Gaussian nature of the initial state is maintained since we have linear dynamics and the noises involved are zero-mean Gaussian noises. One can show that the covariance matrix at time $t$ is $V(t)=M(t)V(0)M^T(t)+\int_0^t ds\; M(s)DM^T(s)$ where $D=\mbox{Diag}[\kappa_a,\kappa_a,\kappa_b,\kappa_b,0,\gamma_c(2\bar n+1)]$. A more explicit solution of the covariance matrix, after integration, is given by
\begin{eqnarray}
KV(t)+V(t)K^T&=&-D+KM(t)V(0)M^T(t) \nonumber \\
&&+M(t)V(0)M^T(t)K^T \nonumber \\
&&+M(t)DM^T(t),
\end{eqnarray}
which is linear and can easily be solved numerically. As mentioned in the main text we take the initial state to be thermal state for $c$ and coherent state for field $j$, this gives $V(0)=\mbox{Diag}[1,1,1,1,2\bar n+1,2\bar n+1]/2$. If one is only interested in steady state solution, it is guaranteed when all real parts of eigenvalues of $K$ are negative, giving $M(\infty)=0$ such that the steady state covariance matrix can be calculated from a simpler equation $KV(t_{s})+V(t_{s})K^T=-D$.

The covariance matrix $V$ describing our three-mode optomechanical system can be written in block form 
\begin{equation}
V_{abc}=\left( \begin{array}{ccc} L_{aa}&L_{ab}&L_{ac}\\ L_{ab}^T&L_{bb}&L_{bc}\\ L_{ac}^T&L_{bc}^T&L_{cc} \end{array}\right)
\end{equation}
where for $j,k=a,b,c$ the block component $L_{jk}$ is a $2\times 2$ matrix describing local mode correlation when $j=k$ and intermodal correlation when $j\ne k$. An $N$-mode covariance matrix has symplectic eigenvalues $\{\nu_k\}_{k=1}^N$ that can be computed from the spectrum of matrix $|i\Omega_N V|$ \cite{weedbrook2012gaussian} where 
\begin{equation}
 \Omega_N=\bigoplus^N_{k=1} \left( \begin{array}{cc} 0&1\\ -1 &0\end{array}\right).
 \end{equation}
For a physical covariance matrix $2 \nu_k\ge 1$ \cite{physicalV}. For an entangled system, e.g. in the partition $ab:c$, the covariance matrix will not be physical after partial transposition with respect to mode $c$ (this is equivalent to flipping the sign of the membrane's momentum fluctuation operator $\delta p$ in $V$). For our system, this unphysical $V^{T_c}$ is shown by one of its three symplectic eigenvalues $\tilde \nu_{\mbox{min}}<1/2$. Entanglement is then quantified by logarithmic negativity as follows $E_{ab:c}=\mbox{max}[0,-\ln{(2\tilde \nu_{\mbox{min}})}]$ \cite{negativity, adesso2004extremal}. Note that the separability condition, when $V^{T_c}$ has $\tilde \nu_{\mbox{min}}\ge1/2$, is sufficient and necessary for $1:N$ mode partition \cite{werner2001bound}. Entanglement $E_{a:b}$ is calculated in similar manner by only considering system $ab$ where the covariance matrix is now
\begin{equation}
V_{ab}=\left( \begin{array}{cc} L_{aa}&L_{ab}\\ L_{ab}^T&L_{bb}\\  \end{array}\right).
\end{equation}

For our calculations we vary laser power $P_j$ and laser-cavity detuning $\Delta_j$.
Other parameters have been fixed such that this setup is viable with present-day technology \cite{groblacher2009observation}.
This includes $\mu=145 \: \mbox{ng}$, $T=300\: \mbox{mK}$, $\l_j=25\: \mbox{mm}$, and $(\omega_c, \omega_{lj}, \gamma_c)=2\pi(947\times 10^3, 2.8\times10^{14}, 140)\: \mbox{Hz}$. Finesse of each cavity is $1.4\times10^4$.

The dynamics of entanglement quantified by logarithmic negativity is shown in Fig. 4 in the main text. It is clear that nonzero $E_{a:b}(t)$ implies non-classicality of the membrane.
If one is interested only in the steady state regime, Fig.~\ref{FIG_SM_SS} shows the corresponding entanglement $E_{a:b}$ while $E_{ab:c}$ is zero in this range (not shown). Note that red colour has been used in the plots for parameters that do not correspond to steady state solution.

\begin{figure}[!h]
\includegraphics[scale=0.32]{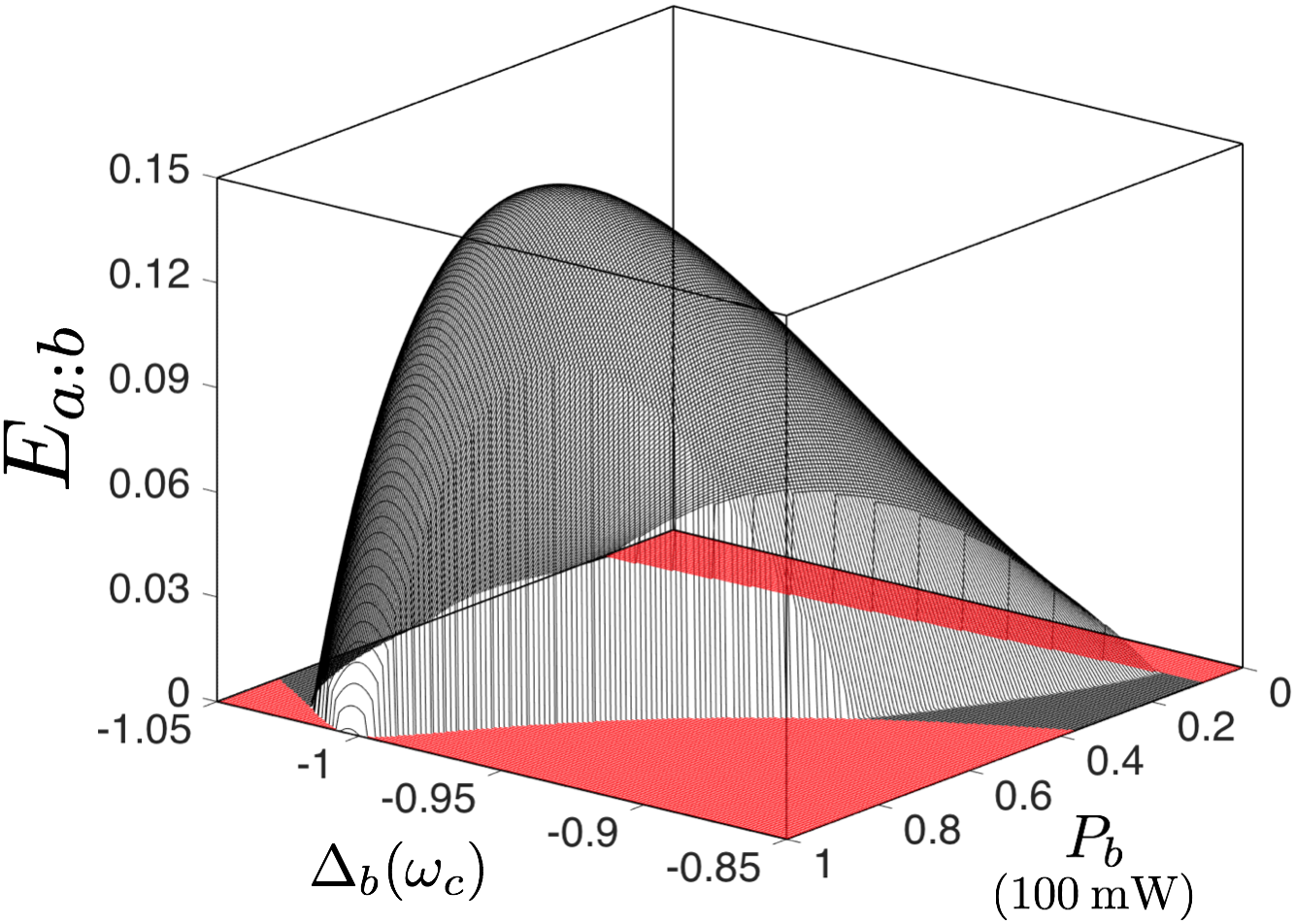}
\caption{Steady state entanglement for varying values of $P_b$, in units of $100$ mW, and $\Delta_b$, in units of the natural frequency of the membrane $\omega_c$.}
\label{FIG_SM_SS} 
\end{figure}

\end{document}